\hideLIPIcs\ccsdesc{}\global\renewcommand\ccsdesc[2][100]{}
\title{First-Fit Coloring of Forests in Random Arrival Model}
\author{Bart{\l}omiej Bosek}
{Institute of Theoretical Computer Science, Faculty of Mathematics and Computer Science, Jagiellonian University, Krak{\'o}w, Poland}
{bartlomiej.bosek@uj.edu.pl}
{https://orcid.org/0000-0001-8756-3663}
{Partially supported by grant no.~2023/51/B/ST6/02833 from National Science Centre, Poland.}
\author{Grzegorz Gutowski}
{Institute of Theoretical Computer Science, Faculty of Mathematics and Computer Science, Jagiellonian University, Krak{\'o}w, Poland}
{grzegorz.gutowski@uj.edu.pl}
{https://orcid.org/0000-0003-3313-1237}
{Partially supported by grant no.~2023/51/B/ST6/02833 from National Science Centre, Poland.}
\author{Micha{\l} Laso\'n}
{Institute of Mathematics of the Polish Academy of Sciences, ul. \'Sniadeckich 8, 00-656 Warszawa, Poland}
{michalason@gmail.com}
{https://orcid.org/0000-0003-4830-2270}
{Partially supported by grant no.~2019/34/E/ST1/00087 from National Science Centre, Poland.}
\author{Jakub Przyby{\l}o}
{AGH University of Krakow, Faculty of Applied Mathematics, al.\ A.\ Mickiewicza 30, 30-059 Krak{\'o}w, Poland}
{jakubprz@agh.edu.pl}
{https://orcid.org/0000-0002-1262-7017}
{}
\authorrunning{B. Bosek, G. Gutowski, M. Laso\'n, and J. Przyby{\l}o}
\keywords{First-Fit, Online Algorithms, Graph Coloring, Random Arrival Model}
\let\leq\leqslant
\let\geq\geqslant
\let\le\leqslant
\let\ge\geqslant
\let\epsi\varepsilon
\let\rho\varrho
\newcommand{\brac}[1]{{\left(#1\right)}}
\newcommand{\set}[1]{\left\{#1\right\}}
\newcommand{\norm}[1]{{\left|#1\right|}}
\newcommand{\ceil}[1]{{\left\lceil #1 \right\rceil}}
\newcommand{\Oh}[1]{O\brac{#1}}
\newcommand{\oh}[1]{o\brac{#1}}
\newcommand{\Om}[1]{\Omega\brac{#1}}
\newcommand{\OT}[1]{\Theta\brac{#1}}
\DeclareMathOperator{\poly}{poly}
\DeclareMathOperator{\EXP}{\mathbb{E}}
\DeclareMathOperator{\RFF}{\mathrm{RFF}}
\DeclareMathOperator{\AFF}{\mathrm{FF}}
\begin{document}

\maketitle

\begin{abstract}
We consider a graph coloring algorithm that processes vertices in order taken uniformly at random and assigns colors to them using First-Fit strategy.
We show that this algorithm uses, in expectation, at most
$(1+\oh{1})\cdot \ln n \, / \ln\ln n$
different colors to color any forest with $n$ vertices.
We also construct a family of forests that shows that this bound is best possible.
\end{abstract}

\section{Introduction}

A \emph{proper $k$-coloring} of a graph $G=(V,E)$ is a function $c : V \mapsto \{1,2,\ldots, k\}$ that assigns a color $c(v)$ to each vertex $v \in V$ so that any adjacent vertices are colored differently, i.e., for each edge $\set{u,w} \in E$, $c(u) \neq c(w)$ is satisfied.
For a given graph $G$, the smallest number $k$ for which $G$ admits a proper $k$-coloring is the \emph{chromatic number} of
$G$ and is denoted by $\chi(G)$.
Graph coloring is one of the most prominent disciplines within graph theory, with plenty of variants, applications, and deep connections to theoretical computer science.
Coloring problems arise naturally in various job scheduling and resource allocation optimization scenarios.

The graph coloring problem is also very popular and well motivated in the online setting, with applications in job scheduling, dynamic storage allocation and resource management~\cite{Kierstead98,Marx04,Lata02}.
In the \emph{online graph coloring} problem, an online algorithm receives as input a graph $G = (V,E)$ presented in an online fashion.
The vertices of $V$ are revealed one after one, in a \emph{presentation order} $v_1 \ll v_2 \ll \ldots \ll v_{n}$.
When a new vertex $v_t$ is revealed in the $t$-th round, for $1 \leq t \leq n$, all the edges connecting $v_t$ with vertices in $V_{t-1}=\{ v_1, \ldots, v_{t-1} \}$ are also revealed.
An online algorithm $\mathcal{A}$ has to immediately assign a feasible color to $v_{t}$, that is, a color that is different from those assigned to the neighbors of $v_{t}$ in $V_{t-1}$.
The goal is to minimize the total number of colors used.

The performance of an online graph coloring algorithm $\mathcal{A}$ is often measured by its \emph{competitive ratio}.
If we denote by $\chi_{\mathcal{A}}(G, \ll)$ the number of colors used by $\mathcal{A}$ when coloring a graph $G$ in a presentation order $\ll$, then the competitive ratio of $\mathcal{A}$ is the maximum ratio $\chi_{\mathcal{A}}(G, \ll)\,/\,\chi(G)$ over all graphs $G$ and all orders $\ll$.
This means, that for a single graph, the analysis focuses on the worst case scenario, and the presentation order is often considered to be selected by an adversary.
If $\mathcal{A}$ is a randomized algorithm, then it is usual to measure the competitive ratio with respect to the expected number of colors used when coloring $G$ in order $\ll$ over all the random choices of $\mathcal{A}$.

Contrary to the offline setting, even coloring $2$-colorable graphs is not easy for online algorithms.
It is impossible to construct an online algorithm that would use any constant number of colors to color all $2$-colorable graphs.
Thus, the research focuses on restricted graph classes, and competitive ratio is often expressed as a function of the number of vertices in a graph.
In fact, even for inputs restricted to online forests with $n$ vertices, the best competitive ratio that can be achieved by either randomized or deterministic online coloring algorithm is $\OT{\log n}$ colors~\cite{Bean76,GyarfasL88,AlbersS17}.
The optimal number of colors used by any deterministic online algorithm to color $2$-colorable graphs with $n$ vertices is known to be somewhere between $2\log_2n-10$ (by a result of Gutowski~et~al.~\cite{GutowskiKMZ14}) and $2\log_2n$ (see a paper by Lov\'asz, Saks, and Trotter~\cite{LovaszST89}).

An algorithm known as First-Fit is arguably the simplest and the most understood of all online deterministic coloring algorithms.
When a vertex $v_t$ is revealed in the $t$-th round, First-Fit picks the least positive integer $i$ which does not occur as a color of any of the previously colored neighbors of $v_t$ and assigns $i$ as a color of $v_t$.
First-Fit performs well on online trees, achieving the competitive ratio of $\OT{\log n}$ within this class.
To be more precise, Bean~\cite{Bean76} and, independently, Gy\'arf\'as and Lehel~\cite{GyarfasL88} proved that First-Fit uses at most $\log_2 n + 1$ colors on forests with $n$ vertices.
Irani~\cite{Irani94} generalized this result, showing that First-Fit uses $O(d \log n)$ colors on $d$-degenerate graphs with $n$ vertices.
Later, Balogh et.\;al.~\cite{BaloghHLY08} with corrections of Chang and Hsu~\cite{ChangH12} improved the above result to at most $(\log_{\frac{d+1}{d}} n+2)$ colors.
For bipartite graphs, there is an easy construction by Lov\'asz, Saks, and Trotter~\cite{LovaszST89} that shows that First-Fit 
uses as many as 
$\frac{n}{2}$ colors to color a $2$-colorable graph with $n$ vertices.

In this paper, we are interested in the \emph{random arrival model} for online algorithms, that tries to focus on the average case, rather than the worst case scenario.
In this model, the presentation order of a graph is not determined by an adversary, but instead it is selected uniformly at random from all possible permutations of the vertex set.
The \emph{performance ratio} of a deterministic online algorithm $\mathcal{A}$ on a graph $G$ is measured by the expectation $\EXP_{\ll}\left[\frac{\chi_{\mathcal{A}}(G, \ll)}{\chi(G)}\right]$ over the random choice of $\ll$.
The performance ratio of $\mathcal{A}$ on a class of graphs $\mathcal{G}$ is the the maximum performance ratio taken over all graphs in $\mathcal{G}$.

In particular, our intention is to start the systematic study of the performance ratio of First-Fit coloring algorithm in random arrival model.
For a class of graphs $\mathcal{G}$, let $\RFF_{\mathcal{G}}(n)$ denote the maximum performance ratio of First-Fit taken over all graphs in $\mathcal{G}$ with $n$ vertices, i.e.,
\[
\RFF_{\mathcal{G}}(n) = \max_{G \in \mathcal{G}, \norm{G}=n} \EXP_{\ll}\left[\frac{\chi_{\mathrm{FF}}(G,\ll)}{\chi(G)}\right]\text{.}
\]
For an easy comparison to the adversarial model, we define:
\[
\AFF_{\mathcal{G}}(n) = \max_{G \in \mathcal{G}, \norm{G}=n} \max_{\ll}\left[\frac{\chi_{\mathrm{FF}}(G,\ll)}{\chi(G)}\right]\text{.}
\]
In this paper, we prove the following result, which establishes the performance of First-Fit on the class of forests.
\begin{theorem}
\label{thm:main}
For the class $\mathcal{F}$ of forests, we have
$
\RFF_{\mathcal{F}}(n) = (\nicefrac{1}{2}\pm\oh{1})\cdot \ln n \,/\, \ln\ln n \text{.}
$
\end{theorem}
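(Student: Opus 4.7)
The plan is to prove matching upper and lower bounds on $\EXP_{\pi}[\chi_{\mathrm{FF}}(F,\pi)]$ over forests $F$ with $n$ vertices; since a forest with at least one edge has chromatic number $2$, the stated performance ratio $\nicefrac{1}{2}\pm\oh{1}$ corresponds to $\EXP[\chi_{\mathrm{FF}}]=(1\pm\oh{1})\cdot \ln n/\ln\ln n$.

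For the upper bound I would start from the standard witness observation: if $\chi_{\mathrm{FF}}(F,\pi)\geq k$ then, starting at some vertex of color $k$ and iteratively passing to an earlier-arrived neighbor of the next-smaller color, one recovers a path $v_{1}v_{2}\cdots v_{k}$ in $F$ with $\pi(v_{1})<\pi(v_{2})<\cdots<\pi(v_{k})$ along which $v_{i}$ has First-Fit color exactly $i$; since $F$ is acyclic, this walk never repeats a vertex. A fixed $k$-vertex path is $\pi$-increasing with probability $1/k!$, and a forest on $n$ vertices has at most $\binom{n}{2}$ paths of any fixed length, so a plain union bound gives $\Pr[\chi_{\mathrm{FF}}\geq k]\leq n^{2}/(2 k!)$, which integrates only to $\EXP[\chi_{\mathrm{FF}}]\leq (2+\oh{1})\ln n/\ln\ln n$. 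The crucial refinement to $(1+\oh{1})\ln n/\ln\ln n$ exploits the recursive structure of the witness: for $i\geq 2$, the assignment $f(v_{i})=i$ forces an additional $i-2$ side-neighbors of $v_{i}$ to realize colors $1,\ldots,i-2$, each itself a root of a smaller witness. Folding these extra probabilistic constraints into the union bound suppresses the tail enough that $\sum_{k}\Pr[\chi_{\mathrm{FF}}\geq k]$ localizes at $k^{*}\approx \ln n/\ln\ln n$.

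For the lower bound I would construct an explicit tree: a complete $d$-ary tree of height $h$ with $d$ and $h$ both of order $\ln n/\ln\ln n$ and $d^{h}\approx n$. Induction on the depth shows that the distribution of the First-Fit color of the root of a depth-$h$ subtree spreads over $\{1,\ldots,h+1\}$; since the root has $d=\Om{h\ln h}$ independent children, a coupon-collector argument shows that with probability $1-\oh{1}$ the early-arriving children collectively exhibit all colors $1,\ldots,h-1$, forcing the root's First-Fit color to be at least $h$. Hence $\EXP[\chi_{\mathrm{FF}}]\geq (1-\oh{1})\ln n/\ln\ln n$ on this tree.

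The main obstacle is closing the factor-of-two gap in the upper bound: the naive monotone-path count is insensitive to the side-branch structure that First-Fit genuinely demands. Incorporating these side branches into a single tractable estimate, without double-counting overlapping witnesses, is the delicate step on which the leading constant depends.
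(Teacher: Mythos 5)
Your reduction of the theorem to showing $\EXP_{\ll}[\chi_{\mathrm{FF}}(F,\ll)]=(1\pm\oh{1})\ln n/\ln\ln n$ is correct, and your diagnosis of where the difficulty lies is accurate, but both halves of the argument have genuine gaps.

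For the upper bound you correctly note that the monotone-path witness only yields $(2+\oh{1})\ln n/\ln\ln n$, and you correctly guess that the side branches must be exploited --- but you then leave exactly this step open (``the delicate step on which the leading constant depends''). The paper's resolution is much lighter than folding in all $i-2$ side-neighbors: a vertex $v$ of color $i\ge 3$ has one earlier neighbor of color $i-1$ and a \emph{different} earlier neighbor of color $i-2$; concatenating the two monotone witness paths through $v$ (vertex-disjoint because the graph is a forest) gives a single path on $2i-2$ vertices whose arrival times increase from both endpoints toward $v$ (\cref{cor:twopaths}). A fixed path on $m=2i-2$ vertices receives such a ``bidirected'' ordering with probability $2^{m-1}/m!$, a forest has at most $n^2/2$ paths (one per pair of endpoints), and the union bound $n^2 2^{2i-3}/(2(2i-2)!)$ already localizes the tail at $i\approx\ln n/\ln\ln n$, since $(2i)!/4^{i}\ge n^2$ requires only $2i\ln i-2i\ge 2\ln n$. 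No control of overlapping witnesses is needed. Without this (or an equivalent) concrete two-branch witness your bound stops at the factor-two loss, so the upper half of the theorem is not established.

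For the lower bound, the complete $d$-ary tree does not do what you claim. All $d$ children of the root are roots of \emph{identical} height-$(h-1)$ subtrees, so their colors are identically distributed, and there is no reason for them to cover $\{1,\ldots,h-1\}$. The failure is already visible at height $2$: a child of the root is colored $1$ only if it precedes all $d$ of its own children, which happens with probability about $1/(d+1)$, so the expected number of early color-$1$ children of the root is $O(1)$ and with probability at least roughly $e^{-1}$ there is none --- in which case the root itself is colored $1$, not $3$. The root's color therefore does not concentrate on $h$, and the coupon-collector step is unjustified (the ``coupons'' are not anywhere near uniformly distributed over the colors). The paper's construction is designed precisely to repair this: the root of $T^r_{i+1}$ is joined to $r=\Theta(k\ln k)$ copies of \emph{each} of $T^r_1,\ldots,T^r_i$, and the root of a copy of $T^r_j$ that arrives before $v$ has color exactly $j$ with probability $1-O(j\gamma/k)$ (at most $j$ by \cref{obs:onepath}, at least $j$ by induction). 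This supplies $r$ independent, reliable witnesses for every color $j\le i$, after which the union bound goes through and $T^r_k$, on $(r+1)^{k-1}$ vertices, forces $k=(1-\oh{1})\ln n/\ln\ln n$ colors with probability $1-\gamma$. You would need to replace your homogeneous $d$-ary tree by such a heterogeneous gadget (or prove a substantially subtler statement about the color distribution in the $d$-ary tree) for the lower half to stand.
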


As mentioned above, we have $\AFF_\mathcal{F}(n) = \OT{\log n}$.
\cref{thm:main} shows that the randomization of the presentation order gives a noticeable, yet rather moderate increase in performance compared to the adversarial model.

One can also consider the following natural off-line graph coloring algorithm.
Given a graph $G$, it selects an order $\ll$ of vertices of $G$ uniformly at random.
Then it uses First-Fit strategy to color vertices of $G$ in order $\ll$.
\cref{thm:main} shows that this algorithm uses $\Oh{\log n\,/\,\log\log n}$ colors in expectation to color any forest.
This algorithm does not compete with easy $2$-coloring algorithms based on graph traversal.
Our objective, however, is to establish the groundwork for a systematic analysis of First-Fit, and other simple online algorithms, in the random arrival model for other graph classes, such as $d$-degenerate graphs, bipartite graphs, and subsequently, $k$-colorable graphs.
It should be noted that for $3$-colorable graphs, the most effective known randomized online algorithm utilizes expected $\Oh{n^{\nicefrac{1}{2}}}$ colors, as proved by Halld\'{o}rsson~\cite{Halldorsson97}.
Considering that the best lower bound for $3$-colorable graphs is in the order of $\Om{\log^2n}$, as demonstrated by Vishwanathan~\cite{Vishwanathan92}, it appears that the direction we are pursuing holds promise for intriguing outcomes.

Another rationale for this line of inquiry is that First-Fit in random arrival model serves as an illustration for the following distributed coloring algorithm that works in the synchronous model.
Each vertex of the graph represents a computation node, and edges represent communication links.
To divide nodes into independent subsets, each node performs a sleep for a random number of units of time, and then assigns to itself the first possible number not assigned to any of the neighbors.
This algorithm uses a small number of messages and is quite fast.
Although we refrain from further delving into this setting, we underscore the versatility afforded by our analysis.

The paper is organized as follows.
In \cref{sec:upper-bound} we show that First-Fit uses at most $(1+\oh{1})\cdot \ln n \, / \, \ln\ln n$ different colors in expectation on any forest with $n$ vertices.
In \cref{sec:lower-bound} we construct a family of trees for which First-Fit uses $(1-\oh{1})\cdot \ln n \, / \, \ln\ln n$ different colors in expectation.
The final section contains brief comments and some open problems regarding the analyzed problem.

\section{Upper bound}
\label{sec:upper-bound}

In this section we show that First-Fit uses at most $\Oh{\log n \, / \, \log\log n}$ different colors in expectation on any forest with $n$ vertices.
For this purpose let us first make a basic observation.
For any fixed graph $G=(V,E)$ and any order $\ll$ of $V$, we denote by $G^\ll$ the acyclic directed graph obtained from $G$ by orienting
every edge $\{u,v\} \in E$ so that it is oriented from $u$ to $v$ if and only if $u \ll v$.

\begin{observation}\label{obs:onepath}
For any graph $G=(V,E)$, a presentation order~$\ll$ of $V$, a positive integer~$i$, and a vertex $v\in V$, if First-Fit assigns color $i$ to $v$ when coloring $G$ in order $\ll$ then there is a directed path in $G^\ll$ with $i$ vertices and $v$ as the last vertex.
\end{observation}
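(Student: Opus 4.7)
The plan is to prove \cref{obs:onepath} by induction on the color $i$ assigned by First-Fit to $v$.

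For the base case $i=1$, the single-vertex path consisting of $v$ itself is a directed path in $G^\ll$ with $1$ vertex, so the claim holds trivially.

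For the inductive step, assume $i \geq 2$ and the statement holds for all colors smaller than $i$. Since First-Fit assigned color $i$ to $v$, every color from $1$ to $i-1$ must already appear among the neighbors of $v$ that precede $v$ in $\ll$; otherwise the algorithm would have chosen a smaller color for $v$. In particular, there exists a neighbor $u$ of $v$ with $u \ll v$ to which First-Fit assigned color $i-1$. By the inductive hypothesis applied to $u$ and color $i-1$, there is a directed path $P$ in $G^\ll$ with $i-1$ vertices ending at $u$. Appending the edge $(u,v)$, which is oriented from $u$ to $v$ since $u \ll v$, produces a directed walk with $i$ vertices ending at $v$.

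It remains to verify that this walk is actually a simple path, that is, $v$ does not already appear on $P$. Every vertex of $P$ precedes or equals $u$ in $\ll$ (since $P$ is a directed path in $G^\ll$ ending at $u$), hence every such vertex is strictly smaller than $v$ in $\ll$. Therefore $v \notin P$, and extending $P$ by $v$ yields a directed path of $i$ vertices in $G^\ll$ ending at $v$, completing the induction. The only subtlety in the argument is this last simplicity check, but it follows directly from the orientation rule for $G^\ll$, so no real obstacle arises.
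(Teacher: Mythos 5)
Your proof is correct and follows essentially the same route as the paper: induction on the color $i$, taking a preceding neighbor $u$ colored $i-1$, extending its path by the edge $(u,v)$, and noting that $v$ cannot lie on that path because all its vertices precede $v$ in $\ll$. Your explicit simplicity check is just a spelled-out version of the paper's appeal to the acyclicity of $G^\ll$.
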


\begin{proof}
    We prove the observation by induction with respect to $i$.
    The statement is trivial for $i=1$, so let us assume $i\geq 1$ and that the thesis holds for all values up to $i$.
    Suppose that First-Fit assigns color $i+1$ to vertex $v$ in some round of the algorithm.
    At this point, $v$ has a neighbor $u$ with $u \ll v$ that was assigned with color $i$ in some earlier round.
    By induction hypothesis, there is a directed path $P$ in $G^\ll$ with $i$ vertices and $u$ as the last vertex.
    Since $u\ll v$, and $G^\ll$ is acyclic we know that $v$ does not belong to $P$.
    Thus, we get the desired path by extending $P$ with directed edge $(u,v)$.
\end{proof}

We call a simple path $P$ in a directed graph to be \emph{bidirected} if it is either a directed path, 
or $P$ can be split into two directed paths, each starting at one of the end points of $P$, and both sharing the same last vertex.

\begin{corollary}\label{cor:twopaths}

For any forest $T=(V,E)$, a presentation order~$\ll$ of $V$, and a positive integer $i \ge 2$, if  First-Fit uses color $i$ when coloring $T$ in order $\ll$ then there is a bidirected path in $T^\ll$ with $2i-2$ vertices.
\end{corollary}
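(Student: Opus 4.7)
My plan is to produce the required bidirected path by applying Observation~\ref{obs:onepath} twice and gluing the two resulting directed paths at their common last vertex. I would start by picking any vertex $v$ that is assigned color $i$ by First-Fit in order $\ll$. Because First-Fit is forced to use color $i$ at $v$, the vertex $v$ must have an earlier neighbor (with respect to $\ll$) of every color $1,2,\dots,i-1$; in particular these earlier neighbors are all distinct vertices of $V$.

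The case $i=2$ is handled immediately: the edge from a color-$1$ earlier neighbor of $v$ into $v$ is already a bidirected path on $2=2i-2$ vertices. For $i\ge 3$, I would pick an earlier neighbor $u$ of $v$ colored $i-1$ and a (necessarily distinct) earlier neighbor $w$ of $v$ colored $i-2$. Observation~\ref{obs:onepath} then supplies directed paths $Q_1$ and $Q_2$ in $T^\ll$ with $i-1$ and $i-2$ vertices, ending at $u$ and $w$ respectively. Extending these by the oriented edges $u\to v$ and $w\to v$ yields directed paths $P_1$ (on $i$ vertices) and $P_2$ (on $i-1$ vertices), both ending at $v$.

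The main obstacle is to show that $P_1$ and $P_2$ meet only at $v$, for once this is established their union is a simple path on exactly $i+(i-1)-1=2i-2$ vertices with $v$ as the common last vertex of its two directed halves, which matches the definition of a bidirected path verbatim. Here I would exploit that $T$ is a forest: the undirected path between any two vertices is unique. Hence a hypothetical shared vertex $x\neq v$ would force the unique $x$-to-$v$ path in $T$ to be a common suffix of both $P_1$ and $P_2$; in particular the predecessor of $v$ would have to coincide in both paths. But those predecessors are $u$ (colored $i-1$) and $w$ (colored $i-2$), which are distinct vertices, producing the desired contradiction and completing the argument.
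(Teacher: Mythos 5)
Your proposal is correct and follows essentially the same route as the paper: handle $i=2$ directly, take earlier neighbors $u$ and $w$ of $v$ colored $i-1$ and $i-2$, apply Observation~\ref{obs:onepath} to each, and glue the two directed paths at $v$. The only difference is that you spell out the vertex-disjointness argument (via uniqueness of paths in a tree and the distinct predecessors $u\neq w$ of $v$), which the paper simply asserts with ``as $T$ is a forest, these paths are vertex disjoint.''
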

\begin{proof}
    For $i=2$, by \cref{obs:onepath} we get a directed path with $2$ vertices.
    Suppose $i \ge 3$, and that First-Fit assigns color $i$ to some vertex $v$ of $T$.
    At this point, $v$ has a neighbor $u$ that is colored $i-1$, and a different neighbor $w$ that is colored $i-2$.
    By \cref{obs:onepath} there is a directed path with $i-1$ vertices and $u$ as the last vertex, and a directed path with $i-2$ vertices and $w$ as the last vertex.
    As $T$ is a forest, these paths are vertex disjoint.
    Since $u\ll v$, and $w\ll v$, we get the desired bidirected path in $T^\ll$ by extending both paths with directed edges $(u,v)$ and $(w,v)$.
\end{proof}

\begin{lemma}
    \label{lem:upper-bound}
    For the class $\mathcal{F}$ of forests, every $n\ge 3$, and an $\alpha_n=\frac{\ln\ln\ln n + 1}{\ln\ln n - \ln\ln\ln n - 1}$, 
    we have:
    \[\RFF_{\mathcal{F}}(n) \le \frac{(1+\alpha_n) \ln n}{2\ln\ln n}+\frac{3}{2}\text{.}
    \]
\end{lemma}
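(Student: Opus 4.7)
The plan is to combine \cref{cor:twopaths} with a union bound over all simple paths in the forest. By \cref{cor:twopaths}, if First-Fit uses color $i\ge 2$ when processing $T$ in order $\ll$, then $T^\ll$ contains a bidirected path on $2i-2$ vertices. For an underlying undirected path $P$ on $k$ vertices, the induced orientation in $T^\ll$ is bidirected precisely when the sequence of $\ll$-ranks along $P$ is unimodal (exactly one local maximum, possibly at an endpoint). Since the restriction of the uniformly random $\ll$ to $V(P)$ is a uniformly random ordering, classifying unimodal orderings of $k$ elements by the position of the maximum gives $\sum_{j=1}^{k}\binom{k-1}{j-1}=2^{k-1}$ such orderings, so the probability that a fixed $P$ becomes bidirected is exactly $2^{k-1}/k!$.

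In a forest, any two vertices are joined by at most one simple path, so the number of simple paths on $k$ vertices is at most $\binom{n}{2}$. A union bound then yields, for every $i\ge 2$,
\[
\Pr_\ll\!\left[\chi_{\mathrm{FF}}(T,\ll)\ge i\right] \;\le\; \binom{n}{2}\cdot\frac{2^{2i-3}}{(2i-2)!}\text{.}
\]
Writing $\EXP_\ll[\chi_{\mathrm{FF}}(T,\ll)] = \sum_{i\ge 1}\Pr_\ll[\chi_{\mathrm{FF}}(T,\ll)\ge i]$, I would split this sum at a threshold $i_0 \approx \frac{(1+\alpha_n)\ln n}{\ln\ln n}$: use the trivial bound $\Pr\le 1$ for $i\le i_0$, and the union bound above for $i>i_0$. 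The ratio between consecutive terms on the tail is $4/(2i(2i-1))$, which is at most $1/2$ for $i\ge 2$, so the tail forms a rapidly decreasing geometric series dominated by its first term up to a factor close to $1$.

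The key estimate is therefore to bound this first tail term $\binom{n}{2}\cdot 2^{2i_0-1}/(2i_0)!$ by a universal constant. Stirling's lower bound $(2i_0)!\ge(2i_0/e)^{2i_0}$ reduces that to the inequality $i_0(\ln i_0 - 1)\ge \ln n$. The expression for $\alpha_n$ is engineered exactly for this: since $\ln i_0 - 1 \ge \ln\ln n - \ln\ln\ln n - 1$ for our choice of $i_0$, the defining identity
\[
(1+\alpha_n)(\ln\ln n - \ln\ln\ln n - 1) = \ln\ln n
\]
immediately yields $i_0(\ln i_0 - 1) \ge \ln n$, with the tiny $\ln(1+\alpha_n)$ term providing slack that absorbs the Stirling factor $\sqrt{2\pi(2i_0)}$. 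Combining the two parts of the split sum and dividing by $\chi(T)\le 2$ (the edgeless case giving ratio $1$, which is trivially within the bound) produces the stated inequality. The main obstacle is bookkeeping: integer rounding of $i_0$, Stirling's multiplicative factor, the geometric-series tail constant, and uniform validity for all $n\ge 3$ must all be accounted for, and one must verify that their combined contribution fits inside the additive $\tfrac{3}{2}$; every structural ingredient --- bidirected paths, the unimodal probability, and the forest path count --- is already supplied by \cref{cor:twopaths} and elementary counting.
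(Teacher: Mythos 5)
Your proposal follows essentially the same route as the paper's proof: \cref{cor:twopaths}, the exact bidirected-path probability $2^{k-1}/k!$ (the paper states it for $k=2i-2$), a union bound over the at most $\binom{n}{2}$ paths in a forest, a tail-sum estimate dominated by its first term, and the same Stirling-type bound $\ln m!\ge m\ln m-m$ reducing everything to $k(\ln k-1)\ge\ln n$, which the definition of $\alpha_n$ delivers via the identity $(1+\alpha_n)(\ln\ln n-\ln\ln\ln n-1)=\ln\ln n$. The bookkeeping you defer is indeed routine, the only item worth making explicit being that $\alpha_n\le 0$ fails to hold for $n=3,4$ (where $\ln\ln n-\ln\ln\ln n-1<0$), so those two cases must be verified directly, as the paper does.
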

\begin{proof}
It is straightforward to verify that $\RFF_{\mathcal{F}}(3) =1$ and $\RFF_{\mathcal{F}}(4) < 2$, hence the lemma holds for $n=3,4$. Let us assume that $n\geq 5$. Then, since the function $\ln \ln x-\ln\ln\ln x-1 > 0$ for every $x>e$ except $x=e^e$, then $\ln \ln n-\ln\ln\ln n-1 > 0$, and hence $\alpha_n>0$.
Consider any forest $T$ with $n$ vertices.
Let $k=\ceil{\frac{(1+\alpha_n)\ln n}{\ln\ln n}}$, and observe that our goal is to prove that First-Fit uses at most $k+2$ colors in expectation when coloring $T$ in a random order.
Indeed, we can assume that $\chi(T) = 2$, as otherwise $T$ is an independent set, and First-Fit uses only one color when coloring $T$ in any order.
Note that $(1+\alpha_n)\ln n / \ln\ln n > \ln n / \ln\ln n > 1$. Hence, $k\geq 2$.
By \cref{cor:twopaths}, for every order $\ll$ of the vertices of $T$ and a positive integer $i \ge 2$, if $\chi_{FF}(T,\ll)=i$, then there is a bidirected path with $2i-2$ vertices in $T^\ll$.
Consider any two vertices $x$, $y$ of $T$.
There is at most one simple path in $T$ with end points $x$ and $y$.
If this path exists, and is a path with exactly $2i-2$ vertices, then the probability that this path is bidirected in $T^\ll$ for a random order $\ll$ is exactly $2^{2i-3}\,/\,(2i-2)!$.
Otherwise, the probability that there is such a path with end points $x$ and $y$ equals $0$.

Thus, by the union bound, the probability that there is at least one bidirected path with $2i-2$ vertices in $T^\ll$
is upper bounded by $\brac{n^2\,/\,2}\cdot\brac{2^{2i-3}\,/\,(2i-2)!}$.
Hence, the expected number of colors used by First-Fit in a random order satisfies:
\begin{align}
    \EXP_{\ll}\left[\chi_{FF}(T,\ll)\right] & =
    \sum_{i=1}^{\infty} i \cdot \mathbb{P}(\chi_{FF}(T,\ll)=i) \le
    k + 1 + \sum_{i=k+2}^\infty \frac{i \cdot n^2 \cdot 2^{2i-4}}{(2i-2)!} \le \nonumber\\
    & = k + 1 + \frac{n^2\cdot 4^{k-1}}{(2k)!} \cdot \sum_{i=k+2}^{\infty} \frac{i\cdot 2^{2(i-(k+1))}}{(2k+1)\cdot(2k+2)\cdot\ldots\cdot(2i-3)\cdot(2i-2)} \le \nonumber\\
    & \le k + 1 + \frac{n^2\cdot 4^{k-1}}{(2k)!} \cdot \sum_{i=k+2}^{\infty} \frac{2^{2(i-(k+1))}}{1\cdot(2k+2)\cdot\ldots\cdot(2i-3)\cdot 2} \le \nonumber\\ & \le k + 1 + \frac{n^2\cdot 4^{k-1}}{(2k)!} \cdot \sum_{i=1}^{\infty} \frac{2^{2i-1}}{(2i-1)!} = \nonumber\\
    & =   k + 1 + \frac{n^2\cdot 4^{k-1}}{(2k)!} \cdot \sinh{2} \le \nonumber\\
    & \le k + 1 + \frac{n^2\cdot 4^{k}}{(2k)!}\text{.} \label{UpperBound1}
\end{align}
For $k=\ceil{\frac{(1+\alpha_n)\ln n}{\ln\ln n}}$, 
as $x\ln x - x$ is an increasing function for $x\ge 1$, and $\alpha_n > 0$, $k\geq 2$, we have that
\begin{align*}
\ln\brac{\frac{(2k)!}{4^{k}}} & \ge
2k \ln (2k) - 2k - k\ln 4 =
2k \ln k - 2k \ge \\
& \ge 2\cdot\frac{(1+\alpha_n)\ln n}{\ln\ln n}\cdot \ln\left(\frac{(1+\alpha_n)\ln n}{\ln\ln n}\right) - 2\cdot\frac{(1+\alpha_n)\ln n}{\ln\ln n}  \ge\\
& \ge 2\cdot\frac{(1+\alpha_n)\ln n}{\ln\ln n}\cdot  \brac{\ln\brac{\frac{\ln n}{\ln\ln n}}-1} =\\
& = 2\cdot\frac{\ln\ln n}{\ln\ln n - \ln\ln\ln n - 1}\cdot\frac{\ln n}{\ln\ln n}\cdot\brac{\ln\ln n - \ln\ln\ln n - 1} =\\
& = \ln(n^2)\text{,}
\end{align*}
and as a consequence, $\frac{(2k)!}{4^{k}} \ge n^2$, 
hence by~\eqref{UpperBound1},
\[
\EXP_{\ll}\left[\chi_{FF}(T,\ll)\right] \le k + 2
\text{.}
\]
As this holds for any forest $T$ with $n$ vertices, we conclude that
\[
\RFF_{\mathcal{F}}(n) \le \frac{(1+\alpha_n) \ln n}{2\ln\ln n}+\frac{3}{2}\text{,}
\]
which ends the proof.
\end{proof}

As $\alpha_n = \Oh{\log\log\log n\,/\,\log \log n}$ in \cref{lem:upper-bound}, we immediately get the following corollary. 
\begin{corollary}
\label{cor:upper-bound}
\[
\RFF_{\mathcal{F}}(n) \leq (\nicefrac{1}{2}+\oh{1})\cdot \ln n \, / \, \ln\ln n \,\text{.}
\]
\end{corollary}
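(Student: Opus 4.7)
The plan is to derive the corollary directly from Lemma \ref{lem:upper-bound}, with essentially no new combinatorial work. The lemma already supplies a fully explicit bound $\RFF_{\mathcal{F}}(n) \le \frac{(1+\alpha_n)\ln n}{2 \ln \ln n} + \frac{3}{2}$, with $\alpha_n = \frac{\ln\ln\ln n + 1}{\ln\ln n - \ln\ln\ln n - 1}$, so the whole task reduces to verifying that the two error contributions, namely $\alpha_n$ and the additive $\frac{3}{2}$, are both absorbed into the $\oh{1}\cdot\ln n/\ln\ln n$ term as $n\to\infty$.

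First I would check that $\alpha_n = \oh{1}$. Setting $L = \ln\ln n$ and $\ell = \ln L = \ln\ln\ln n$, the defining expression becomes $\alpha_n = (\ell+1)/(L-\ell-1)$. Since $\ell/L \to 0$ as $n\to\infty$, the denominator satisfies $L-\ell-1 \sim L$ while the numerator is $\ell + 1 = \oh{L}$, so $\alpha_n\to 0$. A more careful estimate yields $\alpha_n = \Oh{\ln\ln\ln n / \ln\ln n}$, which is exactly the rate asserted informally in the paper just before the corollary.

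Next I would handle the additive constant. Because $\ln n / \ln\ln n \to \infty$, any bounded quantity is $\oh{\ln n/\ln\ln n}$, so in particular $\frac{3}{2} = \oh{\ln n / \ln\ln n}$. Combining the two estimates, the lemma's bound can be rewritten as $\RFF_{\mathcal{F}}(n) \le \left(\frac{1}{2} + \frac{\alpha_n}{2}\right)\frac{\ln n}{\ln\ln n} + \frac{3}{2} = \left(\frac{1}{2} + \oh{1}\right)\frac{\ln n}{\ln\ln n}$, which is the desired conclusion.

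This corollary poses no genuine obstacle: all the substantive analysis was carried out inside Lemma \ref{lem:upper-bound}, and the statement is merely the asymptotic cleanup of its explicit bound. The only mildly delicate point is ensuring that $\alpha_n$ is actually well-defined and positive, i.e., that $\ln\ln n - \ln\ln\ln n - 1 > 0$ for sufficiently large $n$; but the proof of the lemma already verifies this for $n \ge 5$, so nothing new needs to be checked.
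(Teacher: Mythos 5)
Your proposal is correct and follows exactly the paper's route: the corollary is obtained from Lemma~\ref{lem:upper-bound} by observing that $\alpha_n = \Oh{\ln\ln\ln n/\ln\ln n} = \oh{1}$ and that the additive constant is negligible against $\ln n/\ln\ln n$. The verification that $\alpha_n \to 0$ and the absorption of the $\nicefrac{3}{2}$ term are precisely the (implicit) content of the paper's one-line derivation.
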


\section{Lower bound}
\label{sec:lower-bound}

In this section, for any given $0 < \gamma < 1$, and a positive integer $k$ we construct a tree $T$, such that First-Fit uses $k$ colors to color $T$ in a random presentation order with probability at least $1-\gamma$.
Thus, $\EXP_{\ll}[\chi_{FF}(T,\ll)] \ge k(1-\gamma)$.
The number of vertices in the constructed tree is of the order $k^{dk}$ where $d$ is a constant depending on $\gamma$.
This implies that $\RFF_{\mathcal{F}}(n) = \Om{\log n \, / \, \log\log n}$.

For the purpose of analysis, we use a slightly modified, yet equivalent random model.
Namely, 
rather than choosing a permutation in a straightforward manner,
we utilize a natural two-stage process. 
We first associate with every vertex $v$ of a graph $G$ an independent random variable $X_v\sim U[0,1]$ uniformly distributed over $[0,1]$ interval.
We call $X_v$ to be the \emph{position} of a vertex~$v$.
Then, we order the vertices according to their positions, i.e., so that $u\ll v$ if and only if $X_u\leq X_v$.
Note that such an order is uniquely determined with probability $1$.
Moreover, by the symmetry of the process, each resulting vertex permutation is equiprobable.

\begin{lemma}
\label{lem:lower-bound}
For the class $\mathcal{F}$ of forests, we have:
\[
\RFF_{\mathcal{F}}(n) \ge (\nicefrac{1}{2}-\oh{1})\cdot \ln n \, / \, \ln\ln n \,\text{.}
\]
\end{lemma}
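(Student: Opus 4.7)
My plan is to construct a family of trees $T_k$ recursively and to show, using the two-stage position model described above, that First-Fit assigns color $k$ to a distinguished root $r_k$ of $T_k$ with probability approaching $1$. Translating $(k,|T_k|)$ back to $n$ will then yield the claimed asymptotics. Concretely, I let $T_1$ be a single vertex and, for $k\ge 2$, I take a fresh root $r_k$ and attach it by an edge to the root of $m$ disjoint copies of $T_j$ for every $j\in\{1,\ldots,k-1\}$, where $m=m(k)$ is an integer parameter to be tuned as a polynomial in $k$ of degree slightly above $1$. Solving the recurrence $|T_k|=1+m\sum_{j<k}|T_j|$ gives $|T_k|=(m+1)^{k-1}$.

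The analysis is driven by the functions $p_k(s)$, the probability that $r_k$ receives color $k$ conditional on $X_{r_k}=s$, and $P_k(s):=\int_0^s p_k(u)\,du$, which, because positions are uniform and independent, also equals the unconditional probability that a stand-alone copy of $T_k$ has its root in position below $s$ and colored $k$. The key observation is that, before $r_k$ is processed, First-Fit colors any child subtree $C$ of $r_k$ in exactly the same way as it would color $C$ in isolation, since $r_k$ is the only external neighbor of $C$ and is still uncolored. It follows that, for every copy $C$ of $T_j$ attached to $r_k$, the conditional probability that the root of $C$ has position below $s$ and is colored $j$ equals $P_j(s)$, and that these events are mutually independent across the $(k-1)m$ child subtrees. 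Whenever at least one of the $m$ copies of $T_j$ witnesses this event for every $j\in\{1,\ldots,k-1\}$, the root $r_k$ sees all colors $1,\ldots,k-1$ among its earlier neighbors and First-Fit assigns it a color at least $k$. This yields the recursion
\[
p_k(s)\ge\prod_{j=1}^{k-1}\left(1-(1-P_j(s))^m\right).
\]

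An induction on $k$, using $(1-P_j(s))^m\le e^{-mP_j(s)}$ and the inductive estimate that $P_j(s)$ stays close to $s$ outside a slowly vanishing tail, then gives $p_k(s)\to 1$ uniformly for $s$ above a threshold $s_0(k)\to 0$, provided $m$ grows slightly faster than linearly in $k$. Integrating yields $q_k:=\int_0^1 p_k(s)\,ds=1-\oh{1}$. Since $T_k$ is a tree, $\chi(T_k)=2$ for $k\ge 2$, and First-Fit uses at least $k$ colors on the success event, so $\EXP_{\ll}[\chi_{FF}(T_k,\ll)/\chi(T_k)]\ge kq_k/2=(k/2)(1-\oh{1})$. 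Choosing $m=k^{1+\epsi_k}$ with $\epsi_k\to 0$ slowly gives $\log|T_k|\sim(1+\epsi_k)k\log k$, hence $k\sim\log n/((1+\epsi_k)\log\log n)$, and the resulting performance ratio is $(\nicefrac{1}{2}-\oh{1})\ln n/\ln\ln n$, as required.

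The main technical obstacle is the inductive control of $p_k(s)$: the product of $k-1$ factors close to $1$ can accumulate error, so the tail region $s<s_0(k)$, where $p_k(s)$ might be small, must be made to shrink sufficiently fast with $k$, while at the same time $m$ must be kept small enough that $|T_k|=(m+1)^{k-1}$ does not inflate beyond roughly $k^{(1+\oh{1})k}$; otherwise the leading factor $\nicefrac{1}{2}$ is lost. Balancing the degree of $m$ in $k$ against the decay rate of $s_0(k)$ is the delicate point of the argument.
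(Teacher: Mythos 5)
Your plan is correct and follows essentially the same route as the paper: the identical recursive construction (a new root joined to $m$ fresh copies of each smaller tree, giving $(m+1)^{k-1}$ vertices), the uniform random-position model, conditioning on the root's position $s$, and the product bound $\prod_{j<k}\bigl(1-(1-P_j(s))^m\bigr)$ over the independent child copies, with $m$ of order $k^{1+\oh{1}}$ so that $k\sim\ln n/\ln\ln n$. The inductive control you defer is precisely the paper's claim that the root of $T_j$ gets a color smaller than $j$ with probability at most $j\gamma/k$ (equivalently, your $P_j(s)\ge s-j\gamma/k$), which the paper carries out with $\gamma=1/\ln k$ and $m=\lceil 10k\ln^3k\rceil$, confirming that your parameter regime is the right one.
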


\begin{proof}

Fix any $0 < \gamma < 1$, and any integer $k \geq 3$.
Let $c = \nicefrac{10}{\gamma^2}$ and $r = \ceil{c k \ln k}$.
Set $\epsi_i = \nicefrac{i\gamma}{k}$ for $i=1,2,\ldots,k$.
We recursively define rooted trees $T^r_1, T^r_2, \ldots, T^r_k$ as follows.
The tree $T^r_1$ is a single vertex, and for $i=1,2,\ldots,k-1$, the
tree $T^r_{i+1}$ is constructed of $r$ copies of each of the trees $T^r_j$ with $j=1,2\ldots,i$ by joining their roots to a single additional vertex -- the root of $T^r_{i+1}$ (hence the root of $T^r_{i+1}$ has degree $ri$).
See \cref{fig:tree}.

Consider First-Fit coloring of $T^r_i$ in a random presentation order, for any fixed $i \in \set{1,2,\ldots,k}$.
We assume that the presentation order is given by the positions $X_v \in [0,1]$ drawn uniformly at random for every vertex $v$ of $T^r_i$.
By the construction of $T^r_i$, the longest simple path ending at the root of $T^r_i$ includes at most $i$ vertices.
From \cref{obs:onepath} we immediately obtain the following claim.
\begin{claim}\label{clm:rooti}
The color assigned by First-Fit to the root of $T^r_i$ when coloring $T^r_i$ in any order, does not exceed $i$.
\end{claim}
We define $B_i$ to be the random event that First-Fit assigns color smaller then $i$ to the root vertex of $T^r_i$ when coloring $T^r_i$ in a random presentation order.
\begin{claim}\label{clm:success}
    For every $i=1,2,\ldots,k$ we have
    \[
    \mathbb{P}(B_i) \le \epsi_i = \frac{i\gamma}{k}\text{.}
    \]
\end{claim}
\begin{claimproof}
We prove the claim by induction on $i$.
For $i=1$, First-Fit assigns color $1$ to the only vertex of $T^r_1$ and hence $\mathbb{P}(B_1)=0$.
Now, for the induction step, we fix any $1 \le i \le k-1$, and assume that 
\begin{align}\label{BjAssumptions}
   \mathbb{P}(B_j) \le \epsi_j 
\end{align} 
holds for every $j=1,2,\ldots,i$.
We shall prove that $\mathbb{P}(B_{i+1}) \le \epsi_{i+1}$.
For that we focus on the coloring of $T=T_{i+1}^r$.
Denote the root of $T$ by $v$ and let $v_j^{(q)}$ with $j=1,\ldots,i$, $q=1,\ldots,r$ be the neighbors of $v$ in $T$ where each $v_j^{(q)}$ is the root of one of the $r$ copies of $T^r_j$ attached to $v$ -- we denote this copy as $T^{(q)}_j$.
See \cref{fig:tree}.
We denote the (random) color First-Fit assigns to any vertex $w$ by $c(w)$.

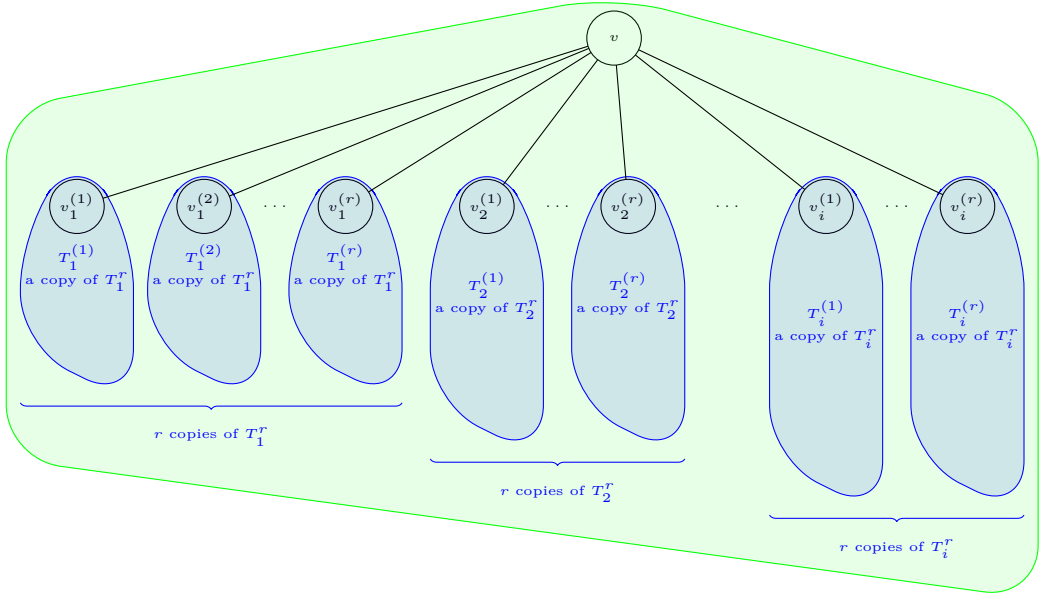
\begin{figure}
\begin{center}
\begin{tikzpicture}
\begin{tiny}
\begin{scope}[scale=\textwidth/1cm,>=latex,line join=bevel]
\begin{scope}[scale=1/18,xshift=0,yshift=0]
\tikzset{
vertex/.style = {circle, draw, minimum size=3em, inner sep=0em, outer sep=0em},
}
\node [vertex] (root) at (  0em,  0em) {$v$};
\draw [rounded corners=3em, fill=green, fill opacity=0.1, draw=green] (0em,3em) -- (30em, -5em) -- (30em,-40em) -- (-43em,-30em) -- (-43em,-5em) -- cycle;

\begin{scope}[yshift=-12em]
\begin{scope}[xshift=-24em]
\begin{scope}[xshift=-14em]
\node [vertex] (v11) at (0em, 0em) {$v_{1}^{(1)}$};
\draw [rounded corners=3em, fill=blue, fill opacity=0.1, draw=blue] (0em,4em) -- (4em, -2em) -- (4em,-14em) -- (-4em,-10em) -- (-4em,-2em) -- cycle;
\node [fit={(0em,4em) (4em, -2em) (4em,-14em) (-4em,-10em) (-4em,-2em)}, blue] {$T_{1}^{(1)}$\\a copy of $T_1^r$};
\draw (v11) -- (root);
\end{scope}
\begin{scope}[xshift=-5em]
\node [vertex] (v12) at (0em, 0em) {$v_{1}^{(2)}$};
\draw [rounded corners=3em, fill=blue, fill opacity=0.1, draw=blue] (0em,4em) -- (4em, -2em) -- (4em,-14em) -- (-4em,-10em) -- (-4em,-2em) -- cycle;
\node [fit={(0em,4em) (4em, -2em) (4em,-14em) (-4em,-10em) (-4em,-2em)}, blue] {$T_{1}^{(2)}$\\a copy of $T_1^r$};
\draw (v12) -- (root);
\end{scope}
\node [circle, minimum size=3em, inner sep=0em, outer sep=0em] (v1d) at (0em, 0em) {$\ldots$};
\begin{scope}[xshift=5em]
\node [vertex] (v1r) at (0em, 0em) {$v_{1}^{(r)}$};
\draw [rounded corners=3em, fill=blue, fill opacity=0.1, draw=blue] (0em,4em) -- (4em, -2em) -- (4em,-14em) -- (-4em,-10em) -- (-4em,-2em) -- cycle;
\node [fit={(0em,4em) (4em, -2em) (4em,-14em) (-4em,-10em) (-4em,-2em)}, blue] {$T_{1}^{(r)}$\\a copy of $T_1^r$};
\draw (v1r) -- (root);
\end{scope}
\draw[decoration={brace,mirror},decorate, blue] (-18em,-14em) -- node[below=1em] {$r$ copies of $T_1^r$} (9em,-14em);
\end{scope}

\begin{scope}[xshift=-4em]
\begin{scope}[xshift=-5em]
\node [vertex] (v21) at (0em, 0em) {$v_{2}^{(1)}$};
\draw [rounded corners=3em, fill=blue, fill opacity=0.1, draw=blue] (0em,4em) -- (4em, -2em) -- (4em,-18em) -- (-4em,-14em) -- (-4em,-2em) -- cycle;
\node [fit={(0em,4em) (4em, -2em) (4em,-18em) (-4em,-14em) (-4em,-2em)}, blue] {$T_{2}^{(1)}$\\a copy of $T_2^r$};
\draw (v21) -- (root);
\end{scope}
\node [circle, minimum size=3em, inner sep=0em, outer sep=0em] (v1d) at (0em, 0em) {$\ldots$};
\begin{scope}[xshift=5em]
\node [vertex] (v2r) at (0em, 0em) {$v_{2}^{(r)}$};
\draw [rounded corners=3em, fill=blue, fill opacity=0.1, draw=blue] (0em,4em) -- (4em, -2em) -- (4em,-18em) -- (-4em,-14em) -- (-4em,-2em) -- cycle;
\node [fit={(0em,4em) (4em, -2em) (4em,-18em) (-4em,-14em) (-4em,-2em)}, blue] {$T_{2}^{(r)}$\\a copy of $T_2^r$};
\draw (v2r) -- (root);
\end{scope}
\draw[decoration={brace,mirror},decorate, blue] (-9em,-18em) -- node[below=1em] {$r$ copies of $T_2^r$} (9em,-18em);
\end{scope}

\node [circle, minimum size=3em, inner sep=0em, outer sep=0em] (v1d) at (8em, 0em) {$\ldots$};

\begin{scope}[xshift=20em]
\begin{scope}[xshift=-5em]
\node [vertex] (vi1) at (0em, 0em) {$v_{i}^{(1)}$};
\draw [rounded corners=3em, fill=blue, fill opacity=0.1, draw=blue] (0em,4em) -- (4em, -2em) -- (4em,-22em) -- (-4em,-18em) -- (-4em,-2em) -- cycle;
\node [fit={(0em,4em) (4em, -2em) (4em,-22em) (-4em,-18em) (-4em,-2em)}, blue] {$T_{i}^{(1)}$\\a copy of $T_i^r$};
\draw (vi1) -- (root);
\end{scope}
\node [circle, minimum size=3em, inner sep=0em, outer sep=0em] (v1d) at (0em, 0em) {$\ldots$};
\begin{scope}[xshift=5em]
\node [vertex] (vir) at (0em, 0em) {$v_{i}^{(r)}$};
\draw [rounded corners=3em, fill=blue, fill opacity=0.1, draw=blue] (0em,4em) -- (4em, -2em) -- (4em,-22em) -- (-4em,-18em) -- (-4em,-2em) -- cycle;
\node [fit={(0em,4em) (4em, -2em) (4em,-22em) (-4em,-18em) (-4em,-2em)}, blue] {$T_{i}^{(r)}$\\a copy of $T_i^r$};
\draw (vir) -- (root);
\end{scope}
\draw[decoration={brace,mirror},decorate, blue] (-9em,-22em) -- node[below=1em] {$r$ copies of $T_i^r$} (9em,-22em);
\end{scope}

\end{scope}

\end{scope}
\end{scope}
\end{tiny}
\end{tikzpicture}
\end{center}
\caption{Recursive construction of the tree $T_{i+1}^r$.}
\label{fig:tree}
\end{figure}

Suppose the position of $v$ is fixed and equals $x$.
Note that if for some $1 \le j \le i$ there is some $q$ such that $c(v_j^{(q)})\geq j$ and the position of $v_j^{(q)}$ is smaller than $x$, then $c(v) \neq j$.
Indeed, due to \cref{obs:onepath} and the fact that $v_j^{(q)}$ is positioned before $v$ we have that $c(v_j^{(q)}) \le j$ and it is assigned with color $j$ by the assumption $c(v_j^{(q)})\geq j$.
Therefore, if there is such a $q$ for every $1 \le j \le i$, then we get $c(v) \ge i+1$.
This allows for the following inequality,

\begin{align}
    \mathbb{P}\left(\overline{B_{i+1}}~|~X_v=x\right)
    &\ge  \mathbb{P}\left(\forall j\leq i ~\exists q\leq r: c(v_j^{(q)}) \ge j \wedge X_{v_j^{(q)}} < x  ~|~X_v=x\right) =\nonumber\\
    &=  \prod_{j=1}^{i} \mathbb{P}\left(\exists q\leq r: c(v_j^{(q)}) \ge j \wedge X_{v_j^{(q)}} < x  ~|~X_v=x\right) =\nonumber\\
    &= \prod_{j=1}^{i} \left(1- \prod_{q=1}^{r}\mathbb{P}\left(c(v_j^{(q)}) < j \vee X_{v_j^{(q)}} > x  ~|~X_v=x\right)\right) \label{ProbOverlineBi}
\end{align}
where the two last equalities above follow by the independence of the corresponding events for a fixed value of $x$.
Obviously $\mathbb{P}(X_{v_j^{(q)}} > x  ~|~X_v=x) = 1-x$, as positions of the vertices are independent.
Further, for any assignment $X$ of positions to all vertices of $T$, one can consider First-Fit coloring of $T_j^{(q)}$ in order given by the restriction of $X$ to the vertices of $T_j^{(q)}$.
Color assigned to the root of $T_j^{(q)}$ in this restricted coloring is not greater then the color assigned to this vertex in the coloring of $T$.
Thus, $\mathbb{P}(c(v_j^{(q)}) < j ~|~X_v=x)\leq \mathbb{P}(B_j)$, and by~\eqref{BjAssumptions}, for every $j\le i$, we have:
\begin{align}\label{ProbSingle}
    \mathbb{P}\left(c(v_j^{(q)}) < j \vee X_{v_j^{(q)}} > x  ~|~X_v=x\right) \leq \min\set{1,1-x+\varepsilon_j} \text{.}
\end{align}
By~\eqref{ProbOverlineBi} and~\eqref{ProbSingle}, we thus obtain that
\begin{align}
    \mathbb{P}\left(B_{i+1}\right) &= 1 - \mathbb{P}\left(\overline{B_{i+1}}\right) 
    = 1 - \int_{0}^{1} \mathbb{P}\left(\overline{B_{i+1}}~|~X_v=x\right) dx  \le \nonumber\\
&\le 1-\int_{0}^{1} \prod_{j=1}^{i} \max\left\{0,1-(1-x+\epsi_j)^r)\right\} dx \le \nonumber\\
&\le
1-\int_{\epsi_{i}}^{1} (1-(1-x+\epsi_i)^r)^{i} dx\text{,} \label{PBi+1-1st_ineq}
\end{align}
where the last inequality follows by the fact that $\epsi_j\le\epsi_i$ for $j\le i$.
Substituting $y=1-x+\epsi_i$ in~\eqref{PBi+1-1st_ineq} we further obtain:
\begin{align}
\mathbb{P}(B_{i+1}) \le
1-\int_{\epsi_{i}}^{1} (1-y^r)^i dy\text{.} \label{PBi+1-2nd_ineq}
\end{align}
Let $f(y) = (1-y^r)^i$.
Observe that $f(0)=1$, $f(1)=0$, and $f$ is strictly decreasing in $[0,1]$.
Thus, we obtain that $\int_0^{\epsi_i} f(y) dy \leq \epsi_i$, and by~\eqref{PBi+1-2nd_ineq},
\begin{align}
P(B_{i+1}) &\le 
\epsi_i + 1 - \int_{0}^{1} f(y) dy \le
\epsi_i + 1 - \left(1-\frac{\gamma}{2k}\right)\cdot f\left(1-\frac{\gamma}{2k}\right) \le\nonumber\\*
&\le
\epsi_i + \frac{\gamma}{2k} + 1-f\left(1-\frac{\gamma}{2k}\right)\text{.}
\label{PBi+1-3rd_ineq}
\end{align}
In order to prove that $P(B_{i+1}) \le \epsi_{i+1} = \epsi_i + \frac{\gamma}{k}$ it thus remains to show that $f(1-\frac{\gamma}{2k}) \geq 1-\frac{\gamma}{2k}$.
Note that
\[
f\left(1-\frac{\gamma}{2k}\right) = \left(1 - \left(1-\frac{\gamma}{2k}\right)^r\right)^i 
\ge \left(1 - \left(1-\frac{\gamma}{2k}\right)^{ck\ln k}\right)^k 
= \left(1-\left(1-\frac{\gamma}{2k}\right)^{\frac{2k}{\gamma}\frac{\gamma}{2}c\ln k}\right)^k\text{.}
\]
Now, for $\alpha = \frac{2k}{\gamma} \geq 1$ we have that $(1-\frac{1}{\alpha})^\alpha < \frac{1}{e}$.
Thus,
\[
f\left(1-\frac{\gamma}{2k}\right) \geq \left(1-\frac{1}{e^{\frac{\gamma c\ln k}{2}}}\right)^k 
= \left(1-\frac{1}{k^{\frac{\gamma c}{2}}}\right)^k\text{.}
\]
As for $0 < \beta = \frac{1}{k^{\frac{\gamma c}{2}}} < 1$ we have that $1-\beta > e^{-\frac{\beta}{1-\beta}}$, we thus further obtain that
\[
f\left(1-\frac{\gamma}{2k}\right) \geq e^{-\frac{\beta k}{1-\beta}} = e^{-\frac{k}{k^\frac{\gamma c}{2}-1}} \geq 1-\frac{k}{k^\frac{\gamma c}{2}-1}\text{.}
\]
Now, for $c=\frac{10}{\gamma^2}$, and using $k \ge 3$ we get
\[
f\left(1-\frac{\gamma}{2k}\right) \geq 1-\frac{k}{k^{\frac{5}{\gamma}}-1} \geq 1-\frac{1}{k^\frac{3}{\gamma}}\text{.}
\]
Observe that $k^{\frac{3}{\gamma}} > \frac{2k}{\gamma}$ for $\gamma \in [0,1]$, and $k \ge 3$.
This finally gives that $f(1-\frac{\gamma}{2k}) \geq 1-\frac{\gamma}{2k}$, and consequently, by~\eqref{PBi+1-3rd_ineq}, $P(B_{i+1}) \le \epsi_{i+1}$, which ends the proof of \cref{clm:success}.
\end{claimproof}
By Claims~\ref{clm:rooti} and~\ref{clm:success}, First-Fit assigns color $k$ to the root of $T^r_k$ with probability at least  $1-\varepsilon_k$. Using Claim~\ref{clm:rooti} (or \cref{obs:onepath}), one can thus easily deduce the following claim.
\begin{claim}\label{cl:Trkexactlyk}
    $\mathbb{P}(\chi_{FF}(T^r_k,\ll)=k)\geq 1-\gamma$.
\end{claim}

\begin{claim}\label{cl:OrderofTrk}
    $T^r_k$ has exactly $(r+1)^{k-1}$ vertices.
\end{claim}
\begin{claimproof}\renewcommand\qedsymbol{\textcolor{lipicsGray}{\ensuremath{\vartriangleleft}}}
    We prove the claim by induction on $k$.
For $k=1$ the claim trivially holds.
Let us assume that $k\geq 1$ and that the claim holds for all the trees $T_1^r,\ldots,T_k^r$.
By the construction of $T^r_{k+1}$,
\begin{align*}
  |T^r_{k+1}| &= 1 + r \cdot \sum_{i=1}^{k} |T^r_i| 
  =  1 + r \cdot \sum_{i=1}^{k} (r+1)^{i-1}
  = 1 + r \cdot \frac{1-(r+1)^{k}}{1-(r+1)}
  = (r+1)^{k}\text{.}\qedhere
\end{align*}\end{claimproof}

For any value of $k$, we set $\gamma = \frac{1}{\ln k}$, and for this choice of $\gamma$ we get $c=10\ln^2k$ and $r=\ceil{10k\ln^3k}$.
Let us denote by $n_k$ the number of vertices in $T^r_k$.
Then the following hold for $k$ large enough. Firstly, by Claim~\ref{cl:OrderofTrk}, we have $\ln n_k \ge k$.
Consequently, again by Claim~\ref{cl:OrderofTrk},
\begin{align}\label{logn-upper_bound}
    \ln n_k &\le (k-1)\ln\left(2+10k\ln^3k\right) 
    \le k \left(\ln k + 4\ln\ln k\right) \le\nonumber\\
    &\le k \left(\ln\ln n_k + 4\ln\ln\ln n_k\right)\text{.}
\end{align}
By Claim~\ref{cl:Trkexactlyk} and~\eqref{logn-upper_bound}, we thus finally obtain that 
\begin{align}
 \EXP_{\ll}\left[\chi_{FF}(T^r_k,\ll)\right]  
 &\ge k \cdot \left(1-\frac{1}{\ln k}\right) \label{final_lower1} \ge\\
 &\ge \frac{\ln n}{\ln\ln n + 4\ln\ln\ln n} \cdot
 \left(1-\frac{1}{\ln\ln n - 2\ln\ln\ln n}\right) 
 = g(n)\text{.}\label{final_lower2}
\end{align}
Note that $g(n)$ is an increasing function for large enough $n$.
Note also that $n_k$ is also an increasing function of $k$.
Consider any (large enough) $n$ such that $n_k\le n \le n_{k+1}$ for some $k$.
Then, since $\RFF_{\mathcal{F}}(n)$ is a nondecreasing function of $n$, by~\eqref{final_lower1} and~\eqref{final_lower2},
\begin{align*}
 \RFF_{\mathcal{F}}(n) &\ge \RFF_{\mathcal{F}}(n_1)  
 \ge \frac{1}{2}\cdot\EXP_{\ll}\left[\chi_{FF}(T^r_k,\ll)\right] 
 \ge \frac{1}{2}\cdot k \cdot \left(1-\frac{1}{\ln k}\right) \ge\\
 &\ge \frac{1}{2}\cdot\brac{(k+1) \cdot \left(1-\frac{1}{\ln (k+1)}\right) - 1}
 \ge \frac{1}{2}\cdot(g(n_{k+1}) - 1) \ge\\
 &\ge \frac{1}{2}\cdot(g(n) - 1)
 = \left(\frac{1}{2}-\oh{1}\right)\frac{\ln n}{\ln\ln n}\text{,}
\end{align*}
which finishes the proof of Lemma~\ref{lem:lower-bound}.
\end{proof}

\section{Final comments}
Finally, \cref{thm:main} is obtained by combining the matching bounds of Corollary~\ref{cor:upper-bound} and Lemma~\ref{lem:lower-bound}.
This concludes our investigation of the efficiency of First-Fit in the random arrival model on the class of forests.
First-Fit is efficient on this class even in the adversarial model.
Still, the randomization of the presentation order allows for some increase in performance.
This raises the hope that First-Fit is more effective in the average case than it is in the worst case also on some other graph classes.

The systematic analysis of First-Fit in the random arrival model on other graph classes should continue for the class $\mathcal{B}$ of bipartite graphs.
There, First-Fit is known to be extremely inefficient in the adversarial model with $\AFF_{\mathcal{B}} = \OT{n}$.
However, there is another simple algorithm~\cite{LovaszST89}, a clever modification of First-Fit, that uses $\Oh{\log n}$ colors to color any bipartite graph with $n$ vertices.
A natural extension of our research would be to assess the First-Fit algorithm in the random arrival model on bipartite graphs.
We anticipate that the outcome will not deviate significantly from the results obtained for forests.

\begin{conjecture}
First-Fit in the random arrival model uses $\poly (\log n)$ colors in expectation when coloring any bipartite graph with $n$ vertices.
\end{conjecture}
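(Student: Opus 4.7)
The plan is to extend the witness-tree technique underlying Observation~\ref{obs:onepath} and Corollary~\ref{cor:twopaths} from forests to arbitrary bipartite graphs. By iterating Observation~\ref{obs:onepath}, if First-Fit assigns color $i$ to a vertex $v$ under a random order $\ll$, then $G$ admits an adjacency-preserving map $\phi_i\colon W_i \to G$ sending the root to $v$, where $W_i$ is the canonical rooted witness tree defined recursively by $W_1=\{\text{root}\}$ and $W_i$ being a root attached to fresh copies of $W_1, W_2, \ldots, W_{i-1}$. Moreover, $\ll$ restricted to the images of $\phi_i$ must be a linear extension of $W_i$ in which every child precedes its parent. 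A short recursion gives $|W_i|=2^{i-1}$ and $\prod_{u\in W_i}s(u)=2^{2^{i-1}-1}$, so the probability that a uniformly random order on fixed images is such a linear extension equals exactly $2^{-(2^{i-1}-1)}$.

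This yields the union-bound expression
\[
\Pr[\exists v : c(v)=i] \le n \cdot H_i \cdot 2^{-(2^{i-1}-1)},
\]
where $H_i = \max_v |\{\phi_i : \phi_i(\text{root})=v\}|$. A crude estimate $H_i \le \Delta(G)^{2^{i-1}-1}$ is already sufficient to deduce $\EXP_\ll[\chi_{FF}(G,\ll)] = \Oh{\log\log n}$ whenever the maximum degree $\Delta(G)$ is bounded by a constant, mirroring the calculation in Lemma~\ref{lem:upper-bound} and even improving upon the forest bound in this restricted regime. The conjecture in its full generality, however, requires handling bipartite graphs of unbounded maximum degree, where this naive enumeration is catastrophic.

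The main obstacle, and the heart of any proof, is therefore a sharper control of $H_i$. I would attack this along two complementary lines. The first is \emph{canonicalisation}: among all homomorphisms witnessing color $i$ at $v$, pick a single canonical representative (for instance, the one whose label at each internal node has the smallest position among all valid choices realising the required color), thereby effectively replacing $H_i$ by $1$ at the cost of a more delicate probabilistic analysis. The second is a \emph{position-conditioned averaging} argument in the spirit of the two-stage random model used in Lemma~\ref{lem:lower-bound}: conditioning on the root position $X_v = x$, the probability that all $2^{i-1}-1$ descendants have positions below $x$ is at most $x^{2^{i-1}-1}$, and integrating this against $x \in [0,1]$ should absorb a polynomial amount of degree blow-up.

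I expect the hard part to be proving that either canonicalisation or averaging loses only a polynomial factor in $n$. Concretely, one would like a structural lemma reducing the effective witness from the tree of size $2^{i-1}$ to a sparser object of size $\poly(i)$; once such a reduction is in place, an essentially verbatim adaptation of the calculation of Lemma~\ref{lem:upper-bound} yields $\EXP_\ll[\chi_{FF}(G,\ll)] = \poly(\log n)$, confirming the conjecture. A positive resolution seems most likely to exploit bipartiteness in a genuinely global way, perhaps via a random-greedy procedure that simultaneously constructs the witness and bounds its evolving branching factor.
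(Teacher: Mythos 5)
This statement is posed in the paper as an open conjecture; the authors give no proof of it, and what you have written is not one either --- it is a research programme whose decisive step you explicitly leave open. The witness-tree setup is the natural starting point, but the entire difficulty of the conjecture is concentrated in the quantity $H_i$ that you introduce and then do not bound: in a dense bipartite graph (say $K_{n/2,n/2}$) the number of homomorphisms of $W_i$ rooted at a fixed vertex is of order $n^{2^{i-1}-1}$, which overwhelms the gain $2^{-(2^{i-1}-1)}$ from the linear-extension probability, and neither ``canonicalisation'' nor ``position-conditioned averaging'' is carried out far enough to show that only a polynomial factor is lost. This is precisely the point where the paper's forest argument (Observation~\ref{obs:onepath} and Corollary~\ref{cor:twopaths}) uses acyclicity in an essential way: between any two vertices of a forest there is at most one path, so the union bound ranges over only $\binom{n}{2}$ candidate witnesses, and no analogue of that count is available here.

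Two further points in your sketch are not merely incomplete but incorrect as stated. First, the claim that the crude estimate $H_i\le \Delta(G)^{2^{i-1}-1}$ already yields $\Oh{\log\log n}$ colors for bounded degree does not follow from your union bound: the resulting expression is $n\cdot(\Delta/2)^{2^{i-1}-1}$, which fails to decay for every $\Delta\ge 2$ (the conclusion itself is trivially true only because First-Fit never exceeds $\Delta+1$ colors, not because of this calculation). Second, for non-injective homomorphisms $\phi_i$ --- which are unavoidable once $G$ contains cycles --- the images of the $2^{i-1}$ tree nodes are fewer than $2^{i-1}$ distinct vertices, so the probability that a uniformly random order satisfies the child-before-parent constraints is not given by the subtree-size product $\prod_{u}1/s(u)$, and the identity $\Pr=2^{-(2^{i-1}-1)}$ must be replaced by an inequality argued separately. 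Until the structural lemma you ask for (a reduction of the witness to an object of size $\poly(i)$ with a controllably small number of realisations) is actually proved, the conjecture remains open.
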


Addressing the aforementioned conjecture represents a significant cognitive pursuit.
However, a truly remarkable feat would be to establish some nontrivial bounds for 3-colorable graphs.
We want to thank Anna Zych-Pawlewicz for inspiring the work on this project.

\bibliography{paper}
\end{document}